\documentclass[pra,10pt,twocolumn,superscriptaddress,floatfix,noshowpacs]{revtex4}

\usepackage[scaled=0.86]{berasans}  
\usepackage[colorlinks=true, allcolors=blue, urlcolor=blue]{hyperref}  
\usepackage{graphicx} 
\usepackage[babel]{microtype}  
\usepackage{amsmath,amssymb,amsthm,bm,amsfonts,mathrsfs,bbm} 
\usepackage{pgfplots}
\usepackage{array}
\usepackage{bigstrut}
\usepackage{algorithm} 
\usepackage{algpseudocode} 
\algnewcommand{\To}{\textbf{to }}
\algnewcommand\Input{\item[\textbf{input:}]}%
\algnewcommand\Output{\item[\textbf{output:}]}%

\algdef{SE}[DOWHILE]{Do}{doWhile}{\algorithmicdo}[1]{\algorithmicwhile\ #1}%



\newcommand{\tr}{\text{tr}}
\newcommand{\ket}[1]{| #1 \rangle}

\newcommand{\be}{\begin{equation}}
\newcommand{\ee}{\end{equation}}
\newcommand{\bea}{\begin{eqnarray}}
\newcommand{\eea}{\end{eqnarray}}
\newcommand{\bes}{\begin{equation*}}
\newcommand{\ees}{\end{equation*}}
\newcommand{\beas}{\begin{eqnarray*}}
	\newcommand{\eeas}{\end{eqnarray*}}





\def\v{\vec{v}}

\def\tr{\mathrm{tr}}

\newtheorem{thm}{Theorem}
\newtheorem*{thm*}{Theorem}

\newtheorem*{lem*}{Lemma}

\newtheorem{conjecture}{Conjecture}

\newtheorem*{lipschitzLem*}{Lemma \ref{lipschitz}}
\newtheorem*{lipschitzCubeLem*}{Lemma \ref{lipschitzCube}}
\newtheorem*{pgmNearlyOptimalThm*}{Theorem \ref{pgmNearlyOptimal}}

\usepackage[cp1250]{inputenc}
\usepackage{amsmath}
\usepackage{amsfonts}
\usepackage{graphicx}
\usepackage{yfonts}
\usepackage{wasysym}
\usepackage{color}
\usepackage[normalem]{ulem}
\usepackage{amsthm}
\usepackage{bm}
\usepackage{bbm}
\usepackage{mathtools}

\def\<{\langle}
\def\>{\rangle}
\newtheorem{Prop}{Proposition}
\usepackage{placeins}
\def\oper{{\mathchoice{\rm 1\mskip-4mu l}{\rm 1\mskip-4mu l}
{\rm 1\mskip-4.5mu l}{\rm 1\mskip-5mu l}}}


\begin{document}

\title{ How many mutually unbiased bases are needed to detect bound entangled states?}

\author{Joonwoo Bae}
\affiliation{
School of Electrical Engineering, Korea Advanced Institute of Science and Technology (KAIST), 291 Daehak-ro Yuseong-gu, Daejeon 34141 Republic of Korea,}

\author{Anindita Bera}
\affiliation{ Institute of Physics, Faculty of Physics, Astronomy and Informatics  Nicolaus Copernicus University, Grudzi\c{a}dzka 5/7, 87--100 Toru\'n, Poland,}

\author{Dariusz Chru\'sci\'nski }
\affiliation{ Institute of Physics, Faculty of Physics, Astronomy and Informatics  Nicolaus Copernicus University, Grudzi\c{a}dzka 5/7, 87--100 Toru\'n, Poland,}

\author{Beatrix C. Hiesmayr}
\affiliation{University of Vienna, Faculty of Physics, W\"ahringer Stra{\ss}e 17, 1090 Vienna, Austria,}

\author{Daniel McNulty}
\affiliation{Center for Theoretical Physics, Polish Academy of Sciences, Al. Lotnik\'ow 32/46, 02-668 Warsaw, Poland,}
\affiliation{Department of Mathematics,  Aberystwyth University, Aberystwyth, Wales, U.K.}

\begin{abstract}
From a practical perspective it is advantageous to develop methods that verify entanglement in quantum states with as few measurements as possible. In this paper we investigate the minimal number of mutually unbiased bases (MUBs) needed to detect bound entanglement in bipartite $(d\times d)$-dimensional states, i.e. entangled states that are positive under partial transposition. In particular, we show that a class of entanglement witnesses composed of mutually unbiased bases can detect bound entanglement if the number of measurements is greater than $d/2+1$. This is a substantial improvement over other detection methods, requiring significantly fewer resources than either full quantum state tomography or measuring a complete set of $d+1$ MUBs. Our approach is based on a partial characterisation of the (non-)decomposability of entanglement witnesses. We show that non-decomposability is a universal property of MUBs, which holds regardless of the choice of complementary observables, and we find that both the number of measurements and the structure of the witness play an important role in the detection of bound entanglement.


\end{abstract}

\maketitle

\section{Introduction}

The extensive study of correlations in quantum systems has revealed a complex hierarchy of entanglement, each class with its own distinct non-classical properties and applications. Examples include entangled states which are unsteerable (admitting a local hidden state model \cite{wiseman07}), steerable states (which admit only a local hidden variable model \cite{schro35}), and entangled states which exhibit Bell non-locality \cite{bell64}. There is also a class of states, known as bound entangled states, from which no pure entangled state can be distilled \cite{PhysRevLett.80.5239,PhysRevLett.82.1056}.  Any entangled state that remains positive under partial transposition (PPT) is bound entangled, but it is not yet known if all bound entangled states are PPT.

Entanglement is a fundamental ingredient in almost all quantum information tasks. Usually, more entanglement in a system implies a higher degree of performance for certain tasks~\cite{PhysRevLett.122.140402, PhysRevLett.122.140403, PhysRevLett.122.140404} and, consequently, highly entangled states are a valuable resource for quantum technologies. While bound entangled states are considered weakly entangled, they can be used for quantum steering (i.e. ruling out local hidden state models \cite{moroder14}), the distillation of secure quantum keys \cite{horodecki05,horodecki08}, and for quantum teleportation~\cite{PhysRevLett.96.150501}. They also provide valuable insights into the study of classical information theory, providing important connections to bound information \cite{gisin00}.

Certifying non-classical correlations is both of fundamental and practical interest \cite{RevModPhys.81.865, GUHNE20091, Chru_ci_ski_2014}, and has been observed at low and high energies (e.g. in the entangled K-meson system~\cite{KLOE}, photons beyond the optical wave length~\cite{JPET}, or oscillating neutrino systems~\cite{HieNeutrino}).

 Given the full information about a quantum system, it is, in general, computationally hard to reveal entanglement~\cite{10.1145/780542.780545}. More precisely, the difficulty lies in the characterisation of those entangled states that remain positive under partial transposition. So far, only a few families of bound entangled states have been constructed~\cite{Darek,Sanperabound,horodecki97,Bennett,divincenzo03,Krammer,Lokard,Bruss,HiesmayrLoeffler2,tura18}, each by very different tools, and a general characterisation of bound entangled states is missing.

There are several criteria which enable the detection of entangled states (cf. \cite{RevModPhys.81.865,GUHNE20091,Chru_ci_ski_2014}). A popular detection scheme is based on the concept of an entanglement witness (EW), a Hermitian operator $\mathbf{W}$ that satisfies
\bea
\tr[\mathbf{W}\rho_{\mathrm{sep}}] &\geq& 0~~\mathrm{for~all~separable~states}~\rho_{\mathrm{sep}}, \nonumber \\
\mathrm{and}~\tr[\mathbf{W}\rho] &<& 0 ~~\mathrm{for~some~entangled~ states}~\rho. \label{eq:def}
\eea
Hierarchical structures on the level of certifications of entanglement have been recently introduced according to the assumptions made in concrete experimental setups~\cite{Supic2020selftestingof, PRXQuantum.2.010201}. In the \textit{standard} scenario all experimental devices are trusted, whereas another extreme scenario is a \textit{device-independent} (DI) one, which demands that the certification does not rely on the details of the source and the measurement devices~\cite{PhysRevLett.121.180503}.  An intermediate scenario is \textit{measurement-device-independent} (MDI) certification that is supposed to close all loopholes that may appear in the detection devices. In this case, the certification can be straightforwardly constructed from EWs~\cite{PhysRevLett.110.060405, PhysRevLett.116.190501}, which we pursue in this contribution.

Note that EWs correspond to a particular set of observables that satisfy Eq. (\ref{eq:def}). They can be decomposed into local measurements, where it is of practical importance to minimise the number of outcomes, or precisely, positive-operator-valued-measure (POVM) elements. An interesting class of EWs arises from sets of mutually unbiased bases (MUBs), that are experimentally the most feasible and also gradually approach quantum state tomography as they converge to being complete.

A pair of orthonormal bases $\mathcal{B}_1$ and $\mathcal{B}_2$ of the space $\mathbb{C}^d$ are mutually unbiased if the basis vectors satisfy $|\langle \psi_i | \varphi_j \rangle |^2 = \frac{1}{d}$ for all $\ket{\psi_i}\in\mathcal{B}_1$ and $\ket{\varphi_j}\in\mathcal{B}_2$. This condition mathematically formalises the concept of complementarity, i.e., if the system resides in an eigenstate of one observable, the outcome of measuring a second complementary observable is maximally uncertain. Generalising to larger collections of measurements, we say a set of bases is mutually unbiased if all bases are pairwise unbiased. As is well known (see \cite{durt10} for a review), complete sets of $d+1$ MUBs have been constructed in prime and prime-power dimensions \cite{ivanovic,wootters}, but it is conjectured that only smaller sets exist in all other cases \cite{boykin07}.

The first direct connection between MUBs and EWs was made in \cite{PhysRevA.86.022311}, with the construction of witnesses composed of two or more MUBs. Generalisations of the witnesses to a wider class of measurements, including 2-designs and mutually unbiased measurements (MUMs), for which MUBs are an example, have since appeared \cite{darek18,kdesign,hiesmayr20,bae13,erker17,kaniewski17,kalev17,bavaresco18}. In low dimensions, equipped with a complete set of MUBs, examples of these witnesses have been shown to experimentally detect bound entangled states \cite{Hiesmayr_2013}. However, little is known about their capability to detect PPT entangled states in arbitrary dimensions and the number of measurements required.

The ability of an EW to detect bound entangled states is directly connected to the notion of non-decomposability \cite{PhysRevLett.82.1056,tehral01,lewenstein00,lewenstein01,ha12}. An entanglement witness $\mathbf{W}$ is decomposable if $\mathbf{W}=A+ B^\Gamma$, with $A,B \geq 0$ ($\Gamma$ denotes the partial transposition). EWs that cannot be decomposed in this way (i.e. non-decomposable) are able to detect bound entangled states. In fact, an EW is non-decomposable if and only if it detects PPT entangled states \cite{lewenstein00}. Unfortunately, there is no general method to construct such objects, and it is usually difficult to decide whether a witness is decomposable or not. One focus of this paper will be to partially ameliorate this difficulty by constructing a general class of non-decomposable witnesses from small sets of MUBs.

Another popular entanglement detection approach implements tomographically complete measurements to recover all information about the given quantum state. With full knowledge of the state, any known detection tool can then be applied to certify entanglement. In this scenario, complete sets of MUBs also appear, as they provide an optimal measurement scheme for quantum state tomography \cite{ivanovic,wootters}. For example, bound entanglement detection via quantum state tomography has been experimentally implemented several times before in the multipartite setting \cite{amselem09,lavole10,barreiro10,kampermann10,kaneda12}, and less frequently in the bipartite scenario \cite{diguglielmo11,Hiesmayr_2013}.

The two certification methods described above and their connection with MUBs is particularly convenient due to the relatively simple implementation of these observables experimentally. For example, experimental implementations of MUBs have been demonstrated in a wide range of quantum information protocols, e.g. \cite{Hiesmayr_2013,ecker19,herrera20,giovannini13,dambrosio13,bartkiewicz16,mafu13,adamson10,lukens18,hu20}, as well as in fairly large dimensional systems \cite{kewming19,bouchard18}. The EW approach has the added bonus of requiring at most $d+1$ MUBs, measured on each subsystem, while full quantum state reconstruction requires $d^2+1$ MUBs in $\mathbb{C}^d\otimes\mathbb{C}^d$ (or any other informationally complete set of measurements on the full state space), and often suffers from errors, such as the stability of the source over all possible measurements  \cite{schwemmer15,sentis18}. In general dimensions (not restricted to prime powers) one can use weighted 2-designs consisting of orthonormal bases to perform quantum state tomography but this requires more bases than MUBs \cite{roy07}. Known witnesses for bound entanglement, e.g., the computable cross-norm or realignment (CCNR) criterion \cite{rudolph03}, similarly require of the order $d^2$ measurements. Methods with as few measurements as possible are therefore highly desirable.

In this work, our primary motivation is to ask whether a complete set of MUBs in $\mathbb{C}^d$ is needed to detect bound entangled states. First, we consider which of the previously constructed EWs, composed of a complete set of MUBs, are useful to detect bound entanglement. We show (Sec. \ref{sec:full_decomposable}) that the original witness described in \cite{PhysRevA.86.022311} is decomposable for any $d$, and fails to detect PPT entangled states. We then demonstrate (Sec. \ref{sec:full_nondecomposable}) that modifying the EW (via re-orderings of a single basis in one subsystem \cite{Hiesmayr_2013,darek18}), results in a non-decomposable witness that detects PPT entangled states in any dimension $d$, provided a complete set of MUBs exists. We derive our main result in Sec. \ref{sectionlessMUBs}, which shows that bound entanglement is detected by \emph{any} set of $m$ MUBs (in every dimension $d$) when $m>d/2+1$. This requires significantly fewer measurements than previous approaches. In Sec. \ref{sec:2mubs}, we show that a pair of Fourier connected MUBs is incapable of certifying bound entanglement. We then conclude with examples in $d=3$ (Sec. \ref{sec:d=3}) and $d=2^r$ (Sec. \ref{sec:special}). In particular, we show that three MUBs are both necessary and sufficient to detect bound entanglement when $d=3$. We conjecture that bound entanglement can be verified with only $m=d/2+1$ MUBs when $d=2^{r}$, $r\geq 2$. This is corroborated with examples, in $d=4$, of non-decomposable witnesses using only three measurements, and as a by-product we present new families of bound entangled states that they detect.

Overall, our work illustrates that MUBs are a powerful tool in the study of quantum correlations, useful not only to distinguish between separable and entangled states, but also to penetrate deeper into the full hierarchy of non-classical correlations. This contribution, together with recent work connecting MUBs with Bell nonlocality \cite{tavakoli21} and quantum steering \cite{zhu16}, is further evidence of the fundamental and practical importance of this special class of measurements in revealing non-classical correlations.

\section{Entanglement witnesses with MUBs}

Let us start with the framework of entanglement witnesses presented in \cite{Bae:2020ty}, where it is clear that EWs are built from a collection of measurements. Suppose that a non-negative operator $C$ is decomposed as
\bea
C = \sum_{a,b} A_a \otimes B_b\,, \nonumber
\eea
where $\{ A_a\}$ and $ \{B_b \}$ POVM elements and refer to the two measurable subspaces of the quantum system of interest. There are non-trivial upper and lower bounds satisfied by all separable states $\rho_{\mathrm{sep}}$, i.e.
\bea
L(C) \leq \tr[\rho_{\mathrm{sep}} C] \leq U(C)\,. \nonumber
\eea
Each of the bounds corresponds to an EW. Since upper bounds are useful to detect PPT entangled states \cite{Hiesmayr_2013,darek18}, we consider EWs that correspond to upper bounds. That is, a witness $\mathbf{W}$ is constructed such that
\bea
\tr[\mathbf{W} \rho] \ngeq 0 \iff \tr[ C \rho ] \nleq U(C)\,, \nonumber
\eea
for entangled states $\rho$ detected by an upper bound.

In what follows, we derive EWs from a collection of MUBs. Let us consider $m$ MUBs in a $d$-dimensional Hilbert space,
\bea
\mathcal{M}_m = \{ \mathcal{B}_0,\mathcal{B}_1,\ldots,\mathcal{B}_{m-1}\}\,,\nonumber
\eea
where $\mathcal{B}_\alpha =\{ |i_\alpha\> \}_{i=0}^{d-1}$, and $\mathcal{B}_0$ is chosen as the canonical basis $\{|0\>,\ldots,|d-1\>\}$. From these MUBs we construct the following operators
\bea
 \mathbf{B}(\mathcal{M}_m,s) &=& \sum_{\ell=0}^{d-1} |\ell\>\<\ell| \otimes  |\ell +s \>\<\ell+s| \nonumber  \\ &+&  \sum_{\alpha=1}^{m-1} \sum_{i=0}^{d-1} |i_\alpha \>\<i_\alpha| \otimes |i_\alpha \>\<i_\alpha|, \nonumber
\eea
and its partial transpose
\bea
 \mathbf{B}^\Gamma(\mathcal{M}_m,s) &=& \sum_{\ell=0}^{d-1} |\ell\>\<\ell| \otimes  |\ell +s \>\<\ell+s| \nonumber  \\ &+&  \sum_{\alpha=1}^{m-1} \sum_{i=0}^{d-1} |i_\alpha \>\<i_\alpha| \otimes |i_\alpha^* \>\<i_\alpha^*|, \nonumber
\eea
where the complex conjugation (and transposition) is performed with respect to the canonical basis $\mathcal{B}_0$, and the sum $\ell+s$ is defined modulo $d$. It has been proven~\cite{PhysRevA.86.022311} that for any separable state, the two operators are bounded from above by the same quantity, depending on the number of MUBs, $m$, and the dimension $d$. In particular,
\begin{equation}\label{}
\tr[ \mathbf{B}(\mathcal{M}_m,s) \, \rho_{\rm sep}] \leq \frac {d+m-1}{d} ,
\end{equation}
and
\begin{equation}\label{}
\tr[\mathbf{B}^\Gamma(\mathcal{M}_m,s)\, \rho_{\rm sep}] \leq \frac {d+m-1}{d} .
\end{equation}
Hence, the operator
\begin{equation}\label{witness}
  \mathbf{W}(\mathcal{M}_m,s) = \frac {d+m-1}{d} \oper_d \otimes \oper_d -  \mathbf{B}(\mathcal{M}_m,s) ,
\end{equation}
with $\oper_d$ the $d\times d$ identity matrix, satisfies $\tr[\mathbf{W}(\mathcal{M}_m,s)\, \rho_{\rm sep}] \geq 0$. Similarly,  $\tr[\mathbf{W}^\Gamma(\mathcal{M}_m,s) \,\rho_{\rm sep}] \geq 0$. It is, therefore, clear that whenever $\mathbf{W}(\mathcal{M}_m,s)$ possesses a strictly negative eigenvalue it defines an entanglement witness.

\section{Witnesses with $s=0$ do not detect bound entanglement}\label{sec:full_decomposable}

Consider first the simplest scenario $s=0$, such that $\mathbf{W}^{\Gamma}(\mathcal{M}_m,0)$ corresponds to the witness first considered in \cite{PhysRevA.86.022311}. Suppose that the $d$-dimensional Hilbert space $\mathcal{H}=\mathbb{C}^d$ admits a complete set of $d+1$ MUBs. The witness of Eq. (\ref{witness}) is therefore 
\begin{equation}\label{}
  \mathbf{W}(\mathcal{M}_{d+1},0) = 2 \oper_d \otimes \oper_d -  \sum_{\alpha=0}^d \sum_{i=0}^{d-1} |i_\alpha\>\<i_\alpha| \otimes |i_\alpha\>\<i_\alpha|  .
\end{equation}
We can then apply the 2-design property of a complete set of MUBs \cite{klapp05}, i.e.
\begin{equation}\label{}
 \sum_{\alpha=0}^d \sum_{i=0}^{d-1} |i_\alpha\>\<i_\alpha| \otimes |i_\alpha\>\<i_\alpha| \;=\;  2\; \Pi_{\rm sym}  ,
\end{equation}
where $\Pi_{\rm sym} = \frac 12 (\oper_d \otimes \oper_d + \mathbb{F}_d )$ denotes a projector onto the symmetric subspace of $\mathcal{H} \otimes \mathcal{H}$. Thus, we have $\mathbf{W}(\mathcal{M}_{d+1},0)=2 (\oper_d \otimes \oper_d -\Pi_{\rm sym})$. Here, $\mathbb{F}_d$ denotes a flip (swap) operator on $\mathbb{C}^d \otimes \mathbb{C}^d$ defined by
\begin{equation}\label{}
 \mathbb{F}_d  = \sum_{i,j=0}^{d-1} |i \>\<j| \otimes |j \>\<i| .
\end{equation}
Actually, $\mathbb{F}_d$ does not depend on the basis.  One has, therefore,
\begin{equation}\label{W0}
   \mathbf{W}(\mathcal{M}_{d+1},0) = 2\, \Pi_{\rm asym} ,
\end{equation}
where $\Pi_{\rm asym} = \frac 12(\oper_d \otimes \oper_d - \mathbb{F}_d)$ denotes a projector onto the antisymmetric subspace in $\mathcal{H} \otimes \mathcal{H}$. Clearly, $  \mathbf{W}(\mathcal{M}_{d+1},0) \geq 0$, and it is therefore not an entanglement witness (recall, that a witness $\mathbf{W}$ has to satisfy  $\tr[ \mathbf{W}\rho_{\rm sep}] \geq 0$ and must possess at least one strictly negative eigenvalue).   However, its partial transpose
\begin{equation}\label{}
   \mathbf{W}^\Gamma(\mathcal{M}_{d+1},0) = \oper_d \otimes \oper_d - dP^+_d ,
\end{equation}
where
\begin{equation}\label{}
  P^+_d = \frac 1d \mathbb{F}_d^\Gamma = \frac 1d \sum_{i,j=0}^{d-1} |i \>\<j| \otimes |i \>\<j| ,
\end{equation}
defines a decomposable entanglement witness corresponding to the well known reduction map \cite{cerf99,horodecki99,darek18}:
\begin{equation}\label{}
  R(X) = \oper_d \,{\rm Tr}X - X .
\end{equation}
It proves that for a trivial permutation, i.e. $s=0$, and access to a complete set of $d+1$ MUBs, one cannot detect bound entangled states. Instead, it detects, e.g., all possible entangled states of the family of isotropic states $\rho_{\rm{iso}}=\frac{1-p}{d^2} \mathbbm{1}_{d^2}+p\; P_{k\ell}$ (where $P_{k\ell}$ are the maximally entangled Bell states, defined in Eq. (\ref{Bell})).


\section{Full set of $d+1$ MUBs and $s > 0$}\label{sec:full_nondecomposable}

To construct non-decomposable witnesses one needs to consider a nontrivial shift $s > 0$ ($s$ is defined ${\rm mod} \ d$), i.e. \begin{equation}\label{}
  \mathbf{W}(\mathcal{M}_{d+1},s) = 2 \oper_d \otimes \oper_d -  \mathbf{B}(\mathcal{M}_{d+1},s).
\end{equation}
Using the 2-design property one immediately finds
\begin{eqnarray}\label{Ws}
  \mathbf{W}(\mathcal{M}_{d+1},s) &=&  \oper_d \otimes \oper_d + (\Pi_0 - \Pi_s) - \mathbb{F}_d \nonumber \\
  &=& 2 \,\Pi_{\rm asym} + (\Pi_0 - \Pi_s) ,
\end{eqnarray}
where
\begin{equation}\label{}
  \Pi_s = \sum_{i=0}^{d-1} |i\>\<i| \otimes |i+s\>\<i+s| .
\end{equation}
Note, that $\Pi_k \Pi_\ell = \delta_{k\ell} \Pi_k$, and  $\,  \Pi_0 + \ldots + \Pi_{d-1} = \oper_d \otimes \oper_d$. Interestingly, one has
\begin{eqnarray}\label{Ws0}
  \mathbf{W}(\mathcal{M}_{d+1},s) =   \mathbf{W}(\mathcal{M}_{d+1},0) + \Pi_0 - \Pi_s ,
\end{eqnarray}
or, equivalently
\begin{eqnarray}\label{Ws0s}
  \mathbf{W}(\mathcal{M}_{d+1},s) + \Pi_s =   \mathbf{W}(\mathcal{M}_{d+1},0) + \Pi_0 .
\end{eqnarray}
Now, we investigate what happens if we consider the transpose. Indeed, it turns out that $ \mathbf{W}^\Gamma(\mathcal{M}_{d+1},s)$ is a Bell diagonal bipartite operator. To show this, let us introduce a set of unitary Weyl operators
\begin{equation}\label{}
  U_{k\ell} |m\> = \omega^{k(m-\ell)}|m - \ell\>\,,
\end{equation}
where $\omega = e^{2\pi i/d}$.  Note that $U_{00}=\oper_d$,\,$\,   \tr[ U_{k\ell} U^\dagger_{mn}] = d\; \delta_{km} \delta_{\ell n}$\,, and
\begin{equation}\label{}
  U_{mn} U_{k\ell} = \omega^{nk} U_{m+k,n+\ell} \ ,\ \ \ U_{k\ell}^\dagger = \omega_d^{k \ell} U_{-k,-\ell} .
\end{equation}
We define the rank-1 projectors, corresponding to maximally entangled states,
\begin{equation}\label{Bell}
  P_{k\ell} := |\psi_{k\ell}\>\<\psi_{k\ell}| ,
\end{equation}
where
\begin{equation}\label{}
 |\psi_{k\ell}\> = \oper_d \otimes U_{k\ell} |\psi^+_d\> ,
\end{equation}
and   $|\psi^+_d\> = \frac{1}{\sqrt{d}} \sum_{n=0}^{d-1} \ket{n} \otimes \ket{n}$ denotes a canonical maximally entangled state.   A set of $d^2$ vectors
$|\psi_{k\ell}\>$---called generalised Bell vectors---defines an orthonormal basis in $\mathcal{H} \otimes \mathcal{H}$. Note that $\Pi_{\ell} := P_{0\ell} + P_{1\ell} + \ldots + P_{d-1,\ell}$, and whence
\begin{eqnarray}\label{WsG}
\mathbf{W}^\Gamma(\mathcal{M}_{d+1},s) &=& \oper_d \otimes \oper_d + (\Pi_0 - \Pi_s) - dP^+_d\nonumber\\
  &=& \oper_d \otimes \oper_d + \sum_{k=0}^{d-1} (P_{k0} - P_{ks}) - dP^+_d \nonumber\\
  &=& \mathbf{W}_{\rm Bell}(s)\,.
\end{eqnarray}

Now we show that the operator $\mathbf{W}_{\rm Bell}(s)$ defines a witness capable of detecting bound entanglement if the shift is nontrivial. In particular:
\begin{Prop}   Let $s>0$ and $d>2$.  Then  ${\mathbf{W}}_{\rm Bell}(s)$ defines a non-decomposable entanglement witness if and only if  $s \neq d/2$.
\end{Prop}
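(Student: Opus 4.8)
The plan is to prove both directions by exploiting the Bell-diagonal structure of $\mathbf{W}_{\rm Bell}(s)$, which is already laid out explicitly in Eq.~\eqref{WsG}. Since the generalised Bell projectors $\{P_{k\ell}\}$ form an orthonormal basis of $\mathcal{H}\otimes\mathcal{H}$, the operator is diagonal in this basis, and one can simply read off its spectrum by tracking the coefficient of each $P_{k\ell}$. Using $\oper_d\otimes\oper_d=\sum_{k,\ell}P_{k\ell}$ and $dP^+_d=dP_{00}$, I would first write $\mathbf{W}_{\rm Bell}(s)=\sum_{k,\ell}\lambda_{k\ell}P_{k\ell}$ and compute each eigenvalue $\lambda_{k\ell}$. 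The projectors $P_{k0}$ (present with $+1$) and $P_{ks}$ (present with $-1$) shift the coefficients on the columns $\ell=0$ and $\ell=s$, while $-dP_{00}$ acts only on the single vector $\ket{\psi^+_d}=\ket{\psi_{00}}$. This bookkeeping yields, for each $k$, the values on the $\ell=0$ and $\ell=s$ columns and confirms a strictly negative eigenvalue exists (establishing it is a genuine witness).

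Next I would address decomposability. Recall $\mathbf{W}$ is decomposable iff $\mathbf{W}=A+B^\Gamma$ with $A,B\geq0$, equivalently iff $\tr[\mathbf{W}\rho]\geq0$ for every PPT state $\rho$. For the Bell-diagonal case this reduces to a clean criterion: I would argue that $\mathbf{W}_{\rm Bell}(s)$ is decomposable precisely when both $\mathbf{W}_{\rm Bell}(s)\geq0$ fails in a way that is ``repaired'' by partial transposition, i.e.\ when $\mathbf{W}_{\rm Bell}^\Gamma(s)$ is also positive on the relevant subspace. The cleanest route is to go back to $\mathbf{W}(\mathcal{M}_{d+1},s)$ itself via Eq.~\eqref{Ws}, $\mathbf{W}(\mathcal{M}_{d+1},s)=2\Pi_{\rm asym}+(\Pi_0-\Pi_s)$, whose only negative contributions come from the $-\Pi_s$ term, and then use the symmetry of the construction under partial transposition. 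The special value $s=d/2$ (meaningful only for even $d$) is exactly where $\Pi_s$ is mapped to itself under the conjugation relating $\Pi_0$ and $\Pi_s$, collapsing the obstruction.

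To make the non-decomposability concrete for $s\neq d/2$, the plan is to exhibit a PPT state $\rho$ with $\tr[\mathbf{W}_{\rm Bell}(s)\rho]<0$; constructing $\rho$ as a suitable Bell-diagonal state supported on the eigenvectors where $\lambda_{k\ell}<0$, and checking $\rho^\Gamma\geq0$ using the known action of $\Gamma$ on the $P_{k\ell}$ basis (which permutes/mixes the Bell projectors in a controlled way), should do it. For the converse, when $s=d/2$, I would show $\mathbf{W}_{\rm Bell}(d/2)$ admits an explicit decomposition $A+B^\Gamma$ with $A,B\geq0$: the reduction-map piece $\oper_d\otimes\oper_d-dP^+_d$ is already of the form $B^\Gamma$ with $B\geq0$ (as in Sec.~\ref{sec:full_decomposable}), and at $s=d/2$ the residual $\Pi_0-\Pi_{d/2}$ term can be absorbed because the partial transpose identifies these two projectors, leaving a positive remainder $A$.

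I expect the main obstacle to be the careful verification that the candidate PPT state for $s\neq d/2$ genuinely satisfies $\rho^\Gamma\geq0$ while still being detected: the partial transpose does not act diagonally on the Bell basis, so one must understand precisely how $\Gamma$ permutes the $P_{k\ell}$ (effectively sending $P_{k\ell}\mapsto$ combinations indexed by the Weyl algebra), and ensure the chosen support avoids creating negative eigenvalues after transposition. Getting this interplay between the diagonal spectrum of $\mathbf{W}_{\rm Bell}(s)$ and the non-diagonal action of $\Gamma$ to line up — so that detection and PPT hold simultaneously exactly when $s\neq d/2$ — is the crux of the argument.
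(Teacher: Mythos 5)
Your high-level strategy matches the paper's: prove non-decomposability for $s\neq d/2$ by exhibiting a PPT state with negative expectation value, and prove decomposability at $s=d/2$ by an explicit $A+B^\Gamma$ splitting. However, both halves of your sketch have genuine gaps. For the non-decomposability direction, the entire content of the proof is the construction of the PPT state, and you defer exactly this step (you even call it ``the crux''). Worse, the hint you give points the wrong way: diagonalising $\mathbf{W}_{\rm Bell}(s)=\sum_{k,\ell}\lambda_{k\ell}P_{k\ell}$ gives $\lambda_{00}=2-d$ as the \emph{only} negative eigenvalue, so a Bell-diagonal state ``supported on the eigenvectors where $\lambda_{k\ell}<0$'' is supported on the maximally entangled state $P_{00}$ alone and is maximally NPT. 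The paper's state $\rho_x\propto[\oper_d\otimes\oper_d-\Pi_0-\Pi_s-\Pi_{d-s}]+\tfrac1x\Pi_s+x\Pi_{d-s}+dP_{00}$ instead spreads weight over the zero- and positive-eigenvalue sectors with \emph{asymmetric} weights $1/x$ and $x$ on $\Pi_s$ and $\Pi_{d-s}$: under partial transposition $dP_{00}$ becomes the flip $\mathbb{F}_d$, which couples $|i,i{+}s\>$ to $|i{+}s,i\>\in{\rm range}\,\Pi_{d-s}$, and the resulting $2\times2$ blocks $\bigl(\begin{smallmatrix}1/x&1\\1&x\end{smallmatrix}\bigr)$ are PSD precisely because the two sectors are distinct, i.e.\ $s\neq d-s$. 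This is where the hypothesis $2s\neq d$ enters, and none of it is recoverable from your sketch.

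For the decomposability direction at $s=d/2$, your proposed mechanism is based on a false statement: the partial transpose does \emph{not} identify $\Pi_0$ and $\Pi_{d/2}$. Both are diagonal in the product basis, so $\Pi_0^\Gamma=\Pi_0$ and $\Pi_s^\Gamma=\Pi_s$ for every $s$; if your absorption argument worked it would work for all $s$ and contradict the first half of the proposition. The actual decomposition in the paper takes $A(s)=\sum_{n=0}^{s-1}\bigl(|nn\>-|(n{+}s)(n{+}s)\>\bigr)\bigl(\<nn|-\<(n{+}s)(n{+}s)|\bigr)$ and $B(s)=\sum_{i<j\neq i+s}(|ij\>-|ji\>)(\<ij|-\<ji|)$; the reason this works only at $s=d/2$ is that the map $n\mapsto n+s$ must be an involution on $\mathbb{Z}_d$ for the diagonal kets to pair off consistently (and for the excluded index pairs $j=i\pm s$ to coincide). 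Your first step --- reading off the Bell-basis spectrum to confirm a negative eigenvalue --- is correct and harmless, but the two load-bearing constructions are respectively absent and based on an incorrect identity.
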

\begin{proof} Consider the following state,
\begin{eqnarray}\label{rho-x}
  \rho_x &= &\frac{1}{\mathcal{N}} ( [\oper_d \otimes \oper_d - \Pi_0 - \Pi_s - \Pi_{d-s}]  + \frac 1x  \Pi_s +
  \nonumber\\
  && x \Pi_{d-s} + d P_{00} )\,,
\end{eqnarray}
with $\mathcal{N}$ being the normalisation factor, and $x > 0$. This construction requires $s \neq d-s$, that is, $d \neq 2s$. Moreover, $s \to d-s$ corresponds to $x \to \frac 1x$. Note, that $\rho_x$ is a PPT state. Indeed, positivity of $\rho_x$ follows from 
$$ \oper_d \otimes \oper_d - \Pi_0 - \Pi_s - \Pi_{d-s} \geq 0 , $$
since $\Pi_k$ are mutually orthogonal projectors. Now, to show that $\rho_x^\Gamma \geq 0$, let us observe due to $\Pi_k^\Gamma = \Pi_k$ (these are diagonal matrices), one has
\begin{eqnarray*}\label{}
  \rho^\Gamma_x &= &\frac{1}{\mathcal{N}} ( [\oper_d \otimes \oper_d - \Pi_0 - \Pi_s - \Pi_{d-s}]  + \frac 1x  \Pi_s +
  \nonumber\\
  && x \Pi_{d-s} + d P^\Gamma_{00} )\ .
\end{eqnarray*}
Recalling, that $d P^\Gamma_{00} = \mathbb{F}_d = \sum_{k,l} |k\>\<l| \otimes |l\>\<k|$, we observe that $\rho^\Gamma_x$ consists of $2 \times 2$ blocks corresponding to the following $2\times 2$ matrices 
$$   \begin{pmatrix}
       1 & 1 \\
       1 & 1
     \end{pmatrix} \ , \ \ \ \begin{pmatrix}
       x & 1 \\
       1 & 1/x
     \end{pmatrix} . $$
Clearly, both matrices are positive definite and, hence, the positivity of $\rho_x^\Gamma$ follows. 

Straightforwardly, we obtain
\begin{eqnarray}\label{}
  \tr[ \mathbf{W}_{\rm Bell}(s) \rho_x ] &=& \frac{1}{\mathcal{N}} (d(d-2) + dx - d(d-1))\nonumber\\
   &=& \frac{d}{\mathcal{N}}(x - 1) ,
\end{eqnarray}
and, hence, $ \tr[ \mathbf{W}^\Gamma(\mathcal{M}_{d+1},s) \rho_x ] < 0$ whenever $x < 1$. Thus, $ \mathbf{W}_{\rm Bell}(s)$ detects the PPT entangled state $\rho_x$ with $x<1$ and $ \mathbf{W}_{\rm Bell}(s)$ is a non-decomposable entanglement witness. 

Now, if $s = \frac d2$, then $\mathbf{W}_{\rm Bell}(s) = A(s) + B^\Gamma(s)$, where
\begin{eqnarray}\label{}
&&A(s) = \sum_{n=0}^{s-1} (|n n \> - |(n+s) (n+s)\>)\nonumber\\
&&\qquad\qquad(\< n n| - \< (n+s) (n+s)|)  ,
\end{eqnarray}
and
\begin{equation}\label{}
  B(s) = \sum_{i< j \neq i+s} (|i j\> - |j i\>)(\< i j| - \<j i|) .
\end{equation}
Clearly, $A(s) \geq 0$ and $B(s)\geq 0$  (being sums of rank-1 projectors) which proves that $\mathbf{W}_{\rm Bell}(s)$ is decomposable.
\end{proof}

\section{$d+1$ MUBs are not necessary to detect bound entanglement}~\label{sectionlessMUBs}

Interestingly, the full set of $d+1$ MUBs is not necessary to detect bound entanglement. The main result of our paper consists of the following:
\begin{thm}\label{mainresult} Let $2s \neq d$, with $s>0$ and $d>2$. If $m > \frac d2 + 1$, then  $\mathbf{W}^{\Gamma}(\mathcal{M}_m,s)$ defines a non-decomposable entanglement witness, i.e. detects bound entangled states.
\end{thm}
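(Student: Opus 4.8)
The plan is to prove non-decomposability constructively, exactly as in the proof of the Proposition: exhibit a single PPT entangled state that $\mathbf{W}^{\Gamma}(\mathcal{M}_m,s)$ detects. Since the separable bound already established for these operators guarantees that $\mathbf{W}^{\Gamma}(\mathcal{M}_m,s)$ is a legitimate witness, a PPT state $\rho$ with $\tr[\mathbf{W}^{\Gamma}(\mathcal{M}_m,s)\rho]<0$ certifies at once both the negative eigenvalue and the non-decomposability. Writing $\mathbf{W}^{\Gamma}(\mathcal{M}_m,s)=\frac{d+m-1}{d}\,\oper_d\ten\oper_d-\Pi_s-\sum_{\alpha=1}^{m-1}P_\alpha$ with $P_\alpha=\sum_i\ketbra{i_\alpha}\ten\ketbra{i_\alpha^*}$, the task reduces to finding a PPT $\rho$ for which $\tr[\Pi_s\rho]+\sum_{\alpha}\tr[P_\alpha\rho]$ exceeds $\frac{d+m-1}{d}$.

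The first step, and the origin of the ``universality over all MUBs'' advertised in the abstract, is a pair of basis-independent identities. First, $\ket{\psi^+_d}$ is a common eigenvector of every $P_\alpha$ with eigenvalue $1$, which follows from $\sum_i\ket{i_\alpha}\ten\ket{i_\alpha^*}=\sqrt{d}\,\ket{\psi^+_d}$ and hence needs only normalisation of the basis vectors. Second, $\tr[P_\alpha\Pi_\ell]=1$ for every $\ell$ and every $\alpha\ge 1$, which needs only $|\langle i_\alpha|j\rangle|^2=1/d$, i.e. unbiasedness with respect to $\mathcal{B}_0$. Consequently, if $\rho$ is assembled only from the diagonal operators $\Pi_\ell$ together with an entangled component living in the sector spanned by $\{\ket{aa}\}$, then $\tr[P_\alpha\rho]$ is the same for all $\alpha$ and independent of the concrete choice of bases, so that the sign of $\tr[\mathbf{W}^{\Gamma}(\mathcal{M}_m,s)\rho]$ is settled simultaneously for every complementary set.

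Step two is the PPT check, organised through the partial transpose. Because $\Pi_\ell^{\Gamma}=\Pi_\ell$ and the entangled component transposes into a multiple of $\mathbb{F}_d$, the operator $\rho^{\Gamma}$ splits into $2\times 2$ blocks on the pairs $\{\ket{ab},\ket{ba}\}$, and positivity reduces to the scalar inequalities $\langle ab|\rho|ab\rangle\,\langle ba|\rho|ba\rangle\ge|\langle aa|\rho|bb\rangle|^2$. Here the hypothesis $2s\neq d$ enters decisively: it ensures that the shift-$s$ sector is disjoint from its flip partner $d-s$, so weight may be loaded asymmetrically into column $s$ (raising $\tr[\Pi_s\rho]$) while the partner column carries only the infinitesimal population required to satisfy its block inequality. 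With the ansatz fixed, one evaluates $\tr[\mathbf{W}^{\Gamma}(\mathcal{M}_m,s)\rho]$ in closed form using the two identities and optimises over the free weights, the threshold $m>d/2+1$ emerging as the point where the gain from the entangled component overtakes the cost of the balancing populations.

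The hard part is precisely this last balance. A crude ansatz that simply loads $\ketbra{\psi^+_d}$ generates a nonzero coherence $\langle aa|\rho|bb\rangle$ in \emph{every} sector, and the block inequalities then force balancing populations across essentially all $d-1$ flip-sectors; a direct calculation shows this only yields the much weaker threshold $m\gtrsim d$ (and reproduces the Proposition's state at $m=d+1$). Reaching $m>d/2+1$ requires cutting the balancing cost roughly in half, which means engineering a less wasteful entangled structure — exploiting the pairing $\{\ell,d-\ell\}$ of sectors and the self-avoidance of the shift-$s$ sector guaranteed by $2s\neq d$ — so that $\sum_\alpha\tr[P_\alpha\rho]$ stays large while only about $d/2$ of the balancing weight is spent. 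Constructing this economical state and verifying its positivity under partial transposition, thereby pinning down the exact constant $d/2+1$, is the crux of the proof; once the state is in hand the trace evaluations are routine given the identities of step one.
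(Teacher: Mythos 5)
You have set up a valid strategy (a PPT state $\rho$ with $\tr[\mathbf{W}^{\Gamma}(\mathcal{M}_m,s)\rho]<0$ would indeed certify both the negative eigenvalue and non-decomposability at once), and your two universal identities are correct. But the argument has a genuine gap at exactly the point you yourself label ``the crux'': the economical PPT state that would bring the threshold down to $m>d/2+1$ is never constructed. As written, this is a plan plus an admission that its central object is missing. Worse, the universal toolkit you restrict yourself to appears provably insufficient to complete the plan. For any state assembled from the $\Pi_\ell$ together with an entangled component $\sigma$ supported on ${\rm span}\{\ket{aa}\}$ --- the only components for which $\tr[P_\alpha\,\cdot\,]$ is MUB-independent, since then $\tr[P_\alpha\sigma]=\<\psi^+_d|\sigma|\psi^+_d\>$ --- the gain from the entangled part is bounded by $(m-1)\<\psi^+_d|\sigma|\psi^+_d\>\le \tfrac{k}{d}(m-1)\tr\sigma$ when $\sigma$ lives on $k$ diagonal kets, while its coherences $\<aa|\sigma|bb\>$ populate a difference set of size at least $2(k-2)$ outside $\{s,d-s\}$, each element of which must be balanced by diagonal weight of order $\tr\sigma/k$ to keep $\rho^\Gamma\ge0$. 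Optimising over $k$ and the weights always lands at the threshold $m\gtrsim d$ that you already observed for $\sigma\propto P_{00}$; no choice within this ansatz reaches $d/2+1$. Moving $\sigma$ off the $\ket{aa}$ sector destroys the universality over all choices of $\mathcal{M}_m$ that the theorem demands, so your route would need a genuinely new idea, not just ``optimisation over the free weights.''

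The paper's proof takes a completely different, non-constructive route that sidesteps explicit states altogether. It assumes a decomposition $\mathbf{W}^{\Gamma}(\mathcal{M}_m,s)=A+B^\Gamma$ with $A,B\ge0$ and exploits the universal matrix elements $W_{ii;jj}\in\{0,1\}$ and $W_{ij;ij}=-(m-1)/d$: the vanishing diagonal entries at $\ket{0s}$, $\ket{r0}$ and $\ket{rs}$ (with $r=d-s$; the hypothesis $2s\neq d$ is what keeps the partner diagonal entries equal to $1$) force the corresponding rows of $A$ to vanish, so the off-diagonal elements $-(m-1)/d$ must sit in $B^\Gamma$. Positivity of the resulting $3\times3$ principal submatrix of $B$ on $\{\ket{00},\ket{ss},\ket{rr}\}$ then requires $(m-1)/d\le 1/2$, i.e.\ $m\le d/2+1$, which is the entire source of the constant. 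To repair your proposal you would either need to reproduce an obstruction of this kind, or actually exhibit the state you defer --- and the latter is not known to exist in a form that is universal over all sets of $m$ MUBs.
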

\begin{proof} Let us start by writing the witness as
\begin{equation}\label{Wijkl}
  \mathbf{W}^{\Gamma}(\mathcal{M}_m,s) = \sum_{i,j,k,l=0}^{d-1} W_{ij;kl} |i\>\<j| \otimes |l\>\< k|,
\end{equation}
with $W_{ij;kl}$ the elements of a matrix $W$. It is clear that the matrix elements $W_{ij;kl}$ depend on a particular set $\mathcal{M}_m$ of MUBs. One can, however, find a subset of $\{ W_{ij;kl} \}$ which is universal, i.e., independent of the choice of MUBs. In particular, the subset of elements
\begin{equation}
  W_{ii;jj} = \left\lbrace
  \begin{array}{l}
 \;1\;  {\rm for} \; j\neq i+s , \\
  \; 0\; {\rm for} \; j=  i+s ,
\end{array}
\right.
\end{equation}
and
\begin{equation}\label{}
  W_{ij;ij} = - \frac{m-1}{d},~\mathrm{for~}i\neq j\,,
\end{equation}
is universal for any set $\mathcal{M}_m$ of $m$ MUBs in $\mathbb{C}^d$. This is a consequence of the property,
\begin{equation}\label{}
  \text{tr}\left[ \sum_{i=0}^{d-1} \left(|i_\alpha \>\<i_\alpha| \otimes |i_\alpha^* \>\<i_\alpha^*|\right) \, \left(|\ell\>\<k| \otimes |\ell\>\<k| \right)\right] = \frac 1d ,
\end{equation}
which holds for any pair $k$ and $\ell$, and a fixed number of $m$ MUBs.

Suppose that $ \mathbf{W}^{\Gamma}(\mathcal{M}_m,s)$ is decomposable. That is, for some $A,B \geq 0$ we have $W = A + B^\Gamma$. Consider three $2 \times 2$ submatrices of $W$ from (\ref{Wijkl}):
\begin{eqnarray}\label{}\nonumber
 A_1&=& \left(\begin{array}{cc} W_{00;ss} & W_{0s;0s} \\ W_{s0;s0} & W_{ss;00} \end{array} \right)
 =  \left( \begin{array}{cc}
           0 & a \\
           a & 1
         \end{array} \right)   \\ \nonumber
     A_2&=&    \left( \begin{array}{cc}
           W_{00;rr} & W_{0r;0r} \\
           W_{r0;r0} & W_{rr;00}
         \end{array} \right) =  \left( \begin{array}{cc}
           1 & a \\
           a & 0
         \end{array} \right)   \\ \nonumber
     A_3&=&    \left( \begin{array}{cc}
           W_{ss;rr} & W_{sr;sr} \\
           W_{rs;rs} & W_{rr;ss}
         \end{array} \right) =  \left( \begin{array}{cc}
           1 & a \\
           a & 0
         \end{array} \right)
\end{eqnarray}
with $a=-(m-1)/d$ and $r=d-s$. Note that $A_1$, $A_2$ and $A_3$ are not positive definite. Therefore, if $ \mathbf{W}^{\Gamma}(\mathcal{M}_m,s)$ is decomposable, then $A_1$, $A_2$ and $A_3$ must originate from $B^\Gamma$, and not from $A$ which is positive definite. We therefore need to check whether $B$ is positive, given that $A_1$, $A_2$ and $A_3$ are submatrices of $B^{\Gamma}$.

If one takes the partial transpose of $B^{\Gamma}$, the off diagonal elements of $A_1$, $A_2$ and $A_3$ that appear in $B^{\Gamma}$ now form a  $3 \times 3$ submatrix of $B$,
\begin{eqnarray}\label{}
 B_{3 \times 3} &=& \left( \begin{array}{ccc}
           W_{00;00} & W_{0s;0s} & W_{0r;0r} \\
           W_{s0;s0} & W_{ss;ss} & W_{sr;sr} \\
           W_{r0;r0} & W_{rs;rs} & W_{rr;rr}
         \end{array} \right)\nonumber\\  &=&  \left( \begin{array}{ccc}
           1 & a & a \\
           a & 1 & a \\
           a & a & 1
         \end{array} \right) .
\end{eqnarray}
It holds that $B_{3 \times 3} \geq 0$ if and only if $a \in [-\frac 12,1]$. Recall that $a = -(m-1)/d$, so that we obtain a necessary condition for the decomposability, i.e., $m \leq \frac d2 + 1$.
This shows that, whenever 
\begin{equation}\label{!}
m > \frac d2 + 1 ,
\end{equation}
the witness $ \mathbf{W}^{\Gamma}(\mathcal{M}_m,s)$ is not decomposable.
\end{proof}
Interestingly, the proof does not rely on the particular choice of $m$ MUBs. Thus, \emph{any} set of mutually unbiased bases (that is sufficiently large) results in a non-decomposable witness able to detect bound entangled states.

\section{Two MUBs give rise to a decomposable witness}\label{sec:2mubs}

We now show that a witness  $\mathbf{W}^{\Gamma}(\mathcal{M}_2,s)$ consisting of two Fourier connected MUBs in $\mathbb{C}^d$ gives rise to a decomposable witness. Suppose that $\mathcal{M}_2$ consists of the standard basis $\mathcal{B}_0=\{\ket{0},\ket{1},\ldots,\ket{d-1}\}$ and its Fourier transform $\mathcal{B}_1=\{\ket{\widetilde 0},\{\ket{\widetilde 1},\ldots,\ket{\widetilde{d-1}}\}$, where
\begin{equation}\label{}
  |\widetilde{k}\> = \frac{1}{\sqrt{d}} \sum_{j=0}^{d-1} \omega^{kj} |j\>\,,
\end{equation}
and $\omega=e^{2\pi i/d}$. Then, we define
\bea\label{}
   \mathbf{W}^{\Gamma}(\mathcal{M}_2,s) = \frac{d+1}{d} \oper_d &\otimes& \oper_d - \sum_{\ell=0}^{d-1} |\ell\>\<\ell| \otimes |\ell+s\>\<\ell +s| \nonumber \\
 &-& \sum_{k=0}^{d-1} |\widetilde{k}\>\<\widetilde{k}| \otimes |\widetilde{k}^*\>\<\widetilde{k}^*|\,.
\eea
One finds
\begin{equation}\label{}
  \sum_{k=0}^{d-1} |\widetilde{k}\>\<\widetilde{k}| \otimes |\widetilde{k}^*\>\<\widetilde{k}^*|  = \frac 1d \sum_{n=0}^{d-1} \sum_{i,j=0}^{d-1} |i\>\<i+n| \otimes |j\>\<j+n|,
\end{equation}
and hence
\begin{equation}\label{}
     \mathbf{W}^{\Gamma}(\mathcal{M}_2,s) = \frac 1d (A(s) + B^\Gamma(s)) ,
\end{equation}
where
\bea\label{}
  A(s) &=& (d-1) (\oper_d \otimes \oper_d - \Pi_s) \nonumber\\
  &-&   \sum_{\substack{n= 0 \\ n\neq s}}^{d-1} \sum_{i\neq j=0}^{d-1} |i\>\<j| \otimes |i+n\>\<j+n| ,
\eea
and $B(s)$ is a sum of $d(d-1)/2$ rank-1 projectors:
\begin{equation}
    B(s) = \sum_{i\neq j=0}^{d-1} |\Psi_{ij}(s)\>\<\Psi_{ij}(s)| ,
\end{equation}
where
\begin{equation}
    |\Psi_{ij}(s)\> =  |i,j+s\> -|j,i+s\>  .
\end{equation}
Evidently $A(s),B(s) \geq 0$, and hence $\mathbf{W}^{\Gamma}(\mathcal{M}_2,s)$ is decomposable for any $s$.

The proof also works for general pairs of Heisenberg-Weyl MUBs in prime dimensions $d=p$, however it is slightly more technical. For example, see the witnesses $\mathbf{W}_{(0,1)}$ and $\mathbf{W}_{(0,3)}$ defined in Sec. \ref{sec:d=3}. One may expect that a decomposition of the witness exists for $\emph{any}$ pair of MUBs, but a proof is hindered by the fact that the classification of pairs is an open problem. Hence, we are unable to describe the general structure of the witness and determine its decomposition. In fact, the classification of pairs of MUBs in $\mathbb{C}^d$ is equivalent to a classification all $d\times d$ complex Hadamard matrices, which is currently known for $d\leq 5$ only \cite{haagerup96}.

The situation is further complicated by the fact that \emph{equivalent} sets of MUBs (up to some unitary transformation) can give rise to different witnesses. For example, a unitary transformation connecting one set of MUBs to another may cause a permutation of the elements of the standard basis. Such a transformation would mean the witness no longer has the particular form given in Eq. (\ref{witness}), and the resulting effect on its (non-)decomposability is unclear.

\section{Example: $d=3$}\label{sec:d=3}

If $d=3$, condition (\ref{!}) implies that three MUBs are sufficient to detect bound entanglement. Actually, we can also show that this condition is necessary. Consider the complete set of four Heisenberg-Weyl MUBs, $\mathcal{B}_0 = \{|0\>,|1\>,|2\>\}$ together with
$\mathcal{B}_1$, $\mathcal{B}_2$ and $\mathcal{B}_3$, which we represent by the following matrices,
\begin{eqnarray}\nonumber
B_1\; =\;  \frac{1}{\sqrt{3}} \left(
\begin{array}{cccc}
1 & 1 & 1 \\
1 & \omega & \omega^* \\
1 & \omega^* & \omega
\end{array}
\right),\,\,
B_2\; =\;  \frac{1}{\sqrt{3}} \left(
\begin{array}{cccc}
1 & 1 & 1 \\
1& \omega & \omega^* \\
\omega^* & \omega & 1
\end{array}
\right),
\\\nonumber
\end{eqnarray}
and $B_3=B_2^*$, where the columns of $B_1$, $B_2$ and $B_3$ correspond to the basis vectors of $\mathcal{B}_1$, $\mathcal{B}_2$ and $\mathcal{B}_3$, respectively. One finds three EWs,
$$\mathbf{W}^{\Gamma}(\{\mathcal{B}_i,\mathcal{B}_j,\mathcal{B}_k\},1):=\mathbf{W}_{(i,j,k)}\,,$$
given by

\begin{equation}\label{}\nonumber
 \mathbf{W}_{(0,1,2)}  =  \left( \begin{array}{ccc|ccc|ccc}
 1 & . & . & . & -\frac{2}{3} & . & . & . & -\frac{2}{3} \\
 .   & . & .    & . & . & \frac{\omega}{3}  & \frac{\omega^*}{3} & . & . \\
 . & . & 1 & \frac{\omega^*}{3} & . & . & . & \frac{\omega}{3}  & . \\  \hline
 . & . & \frac{\omega}{3}  & 1 & . & . & . & \frac{\omega^*}{3}  & . \\
 -\frac{2}{3} & . & . & . & 1 & . & . & . & -\frac{2}{3} \\
 . & \frac{\omega^*}{3} & . & . & . & . & \frac{\omega}{3} & . & . \\   \hline
 . & \frac{\omega}{3}  & . & . & . & \frac{\omega^*}{3}  & . & . & . \\
 . & . & \frac{\omega^*}{3}  & \frac{\omega}{3}  & . & . & . & 1 & . \\
 -\frac{2}{3} & . & . & . & -\frac{2}{3} & . & . & . & 1 \\
\end{array}
\right),
\end{equation}
  \begin{equation}\nonumber
   \mathbf{W}_{(0,1,3)} = \mathbf{W}^*_{(0,1,2)}\,,
   \end{equation}
and
\begin{equation}\label{}\nonumber
 \mathbf{W}_{(0,2,3)} =  \left( \begin{array}{ccc|ccc|ccc}
 1 & . & . & . & -\frac{2}{3} & . & . & . & -\frac{2}{3} \\
 .   & . & .    & . & . & \frac{1}{3}  & \frac{1}{3} & . & . \\
 . & . & 1 & \frac{1}{3} & . & . & . & \frac{1}{3}  & . \\  \hline
 . & . & \frac{1}{3}  & 1 & . & . & . & \frac{1}{3}  & . \\
 -\frac{2}{3} & . & . & . & 1 & . & . & . & -\frac{2}{3} \\
 . & \frac{1}{3} & . & . & . & . & \frac{1}{3} & . & . \\   \hline
 . & \frac{1}{3}  & . & . & . & \frac{1}{3}  & . & . & . \\
 . & . & \frac{1}{3}  & \frac{1}{3}  & . & . & . & 1 & . \\
 -\frac{2}{3} & . & . & . & -\frac{2}{3} & . & . & . & 1 \\
\end{array}
\right) ,
\end{equation}
where the dots represent zeros.

It is, therefore, clear that the matrix structure of $\mathbf{W}^{\Gamma}(\mathcal{M}_3,1)$ depends on a particular set of $\mathcal{M}_3$. However, all three witnesses are non-decomposable and they detect the same PPT entangled state,

\begin{equation}\label{}
 \rho_x = \frac{1}{\mathcal{N}} \left(\begin{array}{ccc|ccc|ccc}
 1 & . & . & . & 1 & . & . & . & 1 \\
 . & 1/x & . & . & . & . & . & . & . \\
 . & . & x & . & . & . & . & . & . \\  \hline
 . & . & . & x & . & . & . & . & . \\
 1 & . & . & . & 1 & . & . & . & 1 \\
 . & . & . & . & . & 1/x & . & . & . \\   \hline
 . & . & . & . & . & . & 1/x & . & . \\
 . & . & . & . & . & . & . & x & . \\
 1 & . & . & . & 1 & . & . & . & 1 \\
\end{array} \right) ,
\end{equation}
with $x \neq 1$ (for $x=1$ it is separable).

Similarly, one finds three witnesses $\mathbf{W}^{\Gamma}(\{\mathcal{B}_i,\mathcal{B}_j\},1):=\mathbf{W}_{(i,j)}$ as follows:
\begin{equation}\label{}\nonumber
 \mathbf{W}_{(0,1)} =  \left( \begin{array}{ccc|ccc|ccc}
 1 & . & . & . & -\frac{1}{3} & . & . & . & -\frac{1}{3} \\
 .   & . & .    & . & . & -\frac{1}{3}  & -\frac{1}{3} & . & . \\
 . & . & 1 & -\frac{1}{3} & . & . & . & -\frac{1}{3}  & . \\  \hline
 . & . & -\frac{1}{3}  & 1 & . & . & . & -\frac{1}{3}  & . \\
 -\frac{1}{3} & . & . & . & 1 & . & . & . & -\frac{1}{3} \\
 . & -\frac{1}{3} & . & . & . & . & -\frac{1}{3} & . & . \\   \hline
 . & -\frac{1}{3}  & . & . & . & -\frac{1}{3}  & . & . & . \\
 . & . & -\frac{1}{3}  & -\frac{1}{3}  & . & . & . & 1 & . \\
 -\frac{1}{3} & . & . & . & -\frac{1}{3} & . & . & . & 1 \\
\end{array}
\right)
\end{equation}

\begin{equation}\label{}\nonumber
\mathbf{W}_{(0,3)}  =  -\left( \begin{array}{ccc|ccc|ccc}
 -1 & . & . & . & \frac{1}{3} & . & . & . & \frac{1}{3} \\
 .   & . & .    & . & . & \frac{\omega}{3}  & \frac{\omega^*}{3} & . & . \\
 . & . & -1 & \frac{\omega^*}{3} & . & . & . & \frac{\omega}{3}  & . \\  \hline
 . & . & \frac{\omega}{3}  & -1 & . & . & . & \frac{\omega^*}{3}  & . \\
 \frac{1}{3} & . & . & . & -1 & . & . & . & \frac{1}{3} \\
 . & \frac{\omega^*}{3} & . & . & . & . & \frac{\omega}{3} & . & . \\   \hline
 . & \frac{\omega}{3}  & . & . & . & \frac{\omega^*}{3}  & . & . & . \\
 . & . & \frac{\omega^*}{3}  & \frac{\omega}{3}  & . & . & . & -1 & . \\
 \frac{1}{3} & . & . & . & \frac{1}{3} & . & . & . & -1 \\
\end{array}
\right)
\end{equation}
and
\begin{equation}\label{}\nonumber
   \mathbf{W}_{(0,2)}  = \mathbf{W}^*_{(0,3)}  .
\end{equation}

Again, the matrix structure of $\mathbf{W}(\mathcal{M}_2,1)$ depends on the particular set of $\mathcal{M}_2$. However, all three witnesses are decomposable. In particular   $\mathbf{W}_{(0,1)} = A + B^\Gamma$, where

\begin{equation}\label{}
A= \frac 13 \left(
\begin{array}{ccc|ccc|ccc}
 2 & . & . & . & -1 & . & . & . & -1 \\
 . & . & . & . & . & . & . & . & . \\
 . & . & 2 & -1 & . & . & . & -1 & . \\  \hline
 . & . & -1 & 2 & . & . & . & -1 & . \\
 -1 & . & . & . & 2 & . & . & . & - 1 \\
 . & . & . & . & . & . & . & . & . \\  \hline
 . & . & . & . & . & . & . & . & . \\
 . & . & -1 & -1 & . & . & . & 2 & . \\
 -1 & . & . & . & -1 & . & . & . & 2 \\
\end{array}
\right)
\end{equation}
and
\begin{equation}\label{}
B = \frac 13
\left(
\begin{array}{ccc|ccc|ccc}
 1 & . & . & . & . & . & . & -1 & . \\
 . & . & . & . & . & . & . & . & . \\
 . & . & 1 & . & -1 & . & . & . & . \\  \hline
 . & . & . & 1 & . & . & . & . & -1 \\
 . & . & -1 & . & 1 & . & . & . & . \\
 . & . & . & . & . & . & . & . & . \\  \hline
 . & . & . & . & . & . & . & . & . \\
 -1 & . & . & . & . & . & . & 1 & . \\
 . & . & . & -1 & . & . & . & . & 1 \\
\end{array}
\right) ,
\end{equation}
with $A,B\geq 0$. Similar decompositions hold for the remaining two witnesses.

\section{Special dimensions: $d=2^r$}\label{sec:special}

We now improve Theorem \ref{mainresult} by showing that 3 MUBs (i.e., $m=\tfrac{d}{2} +1$) are sufficient for bound entanglement detection when $d=4$. In particular, we find specific examples of three MUBs which detect bound entanglement. It remains open whether all triples (for which there is an infinite family \cite{brierley10}) are sufficient for bound entanglement detection.

Consider the following set of three MUBs in $\mathbb{C}^4$, denoted by $\{\mathcal{B}_0, \mathcal{B}_1,\mathcal{B}_x\}$, where $\mathcal{B}_0$ is the canonical basis, $\mathcal{B}_1$, in matrix form, is given by
\begin{eqnarray}\label{}
  B_1 = \frac 12 \left( \begin{array}{cccc} 1 & 1 & 1 & 1 \\ 1 & 1 & -1 & -1 \\ 1 & -1 & -1 & 1 \\ 1 & -1 & 1 & -1 \end{array} \right) \ ,
  \end{eqnarray}
and $\mathcal{B}_x$, a third mutually unbiased basis, is chosen as either
  \begin{eqnarray}\label{}
  B_{\text{ext}} = \frac 12 \left( \begin{array}{cccc} 1 & 1 & 1 & 1 \\ i & -i & i & -i \\ -1 & -1 & 1 & 1\\ i & -i & -i & i  \end{array} \right) \ ,
    \end{eqnarray}
    or
    \begin{eqnarray}\label{}
   B_{\text{unext}} = \frac 12 \left( \begin{array}{cccc} 1 & 1 & 1 & 1 \\ 1 & 1 & -1 & -1 \\ -1 & 1 & 1 & -1 \\ 1 & -1 & 1 & -1 \end{array} \right) \ .
\end{eqnarray}
The three MUBs $\{\mathcal{B}_0, \mathcal{B}_1,\mathcal{B}_{\text{ext}}\}$ form an \emph{extendible} set, which is a subset of the complete set of five Heisenberg-Weyl MUBs. In contrast, $\{\mathcal{B}_0, \mathcal{B}_1,\mathcal{B}_{\text{unext}}\}$ is \emph{unextendible}, in the sense that no other basis is mutually unbiased to all three bases \cite{brierley10}.
We label the associated EWs
\begin{equation}
\mathbf{W}^\Gamma(\{\mathcal{B}_0, \mathcal{B}_1,\mathcal{B}_x\},1):=\mathbf{W}_{x}\,.
\end{equation}
Now we have the following result.

\begin{Prop}\label{prop:d=4} The entanglement witnesses $$\mathbf{W}_{\emph{ext}} \,\,\,\,\text{and} \,\,\,\,\mathbf{W}_{\emph{unext}}$$ are non-decomposable.
\end{Prop}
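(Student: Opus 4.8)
The plan is to prove non-decomposability by \emph{detection}: since any decomposable witness $\mathbf{W}=A+B^\Gamma$ with $A,B\ge0$ satisfies $\tr[\mathbf{W}\rho]=\tr[A\rho]+\tr[B\rho^\Gamma]\ge0$ for every PPT state $\rho$, it suffices to exhibit, for each of $\mathbf{W}_{\mathrm{ext}}$ and $\mathbf{W}_{\mathrm{unext}}$, a state $\rho$ with $\rho\ge0$, $\rho^\Gamma\ge0$ and $\tr[\mathbf{W}_x\rho]<0$. Such a $\rho$ is automatically PPT entangled, so a single construction delivers both the non-decomposability claim and the advertised new families of bound entangled states. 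This route is essentially forced on us, because Theorem~\ref{mainresult} only yields non-decomposability for $m>d/2+1$, whereas here $d=4$ and $m=3$ sit exactly at the boundary: then $a=-(m-1)/d=-\tfrac12$, and the submatrix $B_{3\times3}$ is only positive \emph{semi}definite, being singular with kernel spanned by $(1,1,1)$. The universal obstruction of Theorem~\ref{mainresult} is thus inconclusive, and the non-universal matrix elements, which differ between the extendible and unextendible choices, must enter the argument.

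Concretely, I would first assemble the two $16\times16$ matrices $\mathbf{W}_{\mathrm{ext}}$ and $\mathbf{W}_{\mathrm{unext}}$ from the defining data, via $\mathbf{W}^\Gamma(\mathcal{M}_3,1)=\tfrac32\,\oper_4\otimes\oper_4-\sum_{\ell}|\ell\rangle\langle\ell|\otimes|\ell+1\rangle\langle\ell+1|-\sum_{\alpha=1}^{2}\sum_i|i_\alpha\rangle\langle i_\alpha|\otimes|i_\alpha^*\rangle\langle i_\alpha^*|$, taking the second and third bases from $B_1$ and $B_{\mathrm{ext}}$ (respectively $B_{\mathrm{unext}}$). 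I would then look for a detecting state inside the Bell-diagonal-type family used earlier, namely a mixture built from the projectors $\Pi_s$ together with a maximally entangled contribution $d\,P_{00}$, in the spirit of $\rho_x$ in Eq.~\eqref{rho-x} and of the explicit $d=3$ state above. A parameter (call it $x$) is tuned so that, first, the $\Pi_s$ weights keep both $\rho$ and $\rho^\Gamma$ positive, the partial transpose being verified block by block since such states are sparse, and, second, the $d\,P_{00}$ term overlapping the negative diagonal entries of the witness produces $\tr[\mathbf{W}_x\rho]<0$ on an interval of $x$. Because the two witnesses have different off-diagonal (non-universal) entries, the extendible and unextendible cases may require genuinely different detecting states, so I would run the construction separately for each and record the resulting states as the new bound-entangled families.

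The main obstacle is the \emph{marginality} of the boundary case. At $m=d/2+1$ the witness is only barely non-decomposable, so the detection margin $\tr[\mathbf{W}_x\rho]$ is small and a generic ansatz will fail; the state must be fine-tuned to lie close to the edge of the PPT cone while keeping a strictly negative witness value, and confirming $\rho^\Gamma\ge0$ at that edge is the delicate step. A secondary difficulty is the unextendible case: lacking the Heisenberg-Weyl group structure that organises the extendible set, its overlaps are less symmetric, so neither the witness nor the candidate state inherits a convenient block form and positivity must be checked directly. An alternative, purely algebraic route would assume $\mathbf{W}_x=A+B^\Gamma$, exploit the singularity of $B_{3\times3}$ (kernel $(1,1,1)$) together with $B\ge0$ to pin down further entries of $B$, and then contradict $A=\mathbf{W}_x-B^\Gamma\ge0$ using the specific non-universal elements. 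I expect the explicit-state method to be cleaner and to yield the bound-entangled families for free, so I would attempt it first and resort to the algebraic argument only if tuning the state proves intractable.
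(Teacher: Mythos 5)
Your proposal matches the paper's proof: the authors establish non-decomposability of $\mathbf{W}_{\mathrm{ext}}$ and $\mathbf{W}_{\mathrm{unext}}$ precisely by exhibiting one-parameter families of PPT states $\rho_a$ and $\rho_b$ (Eqs.~(\ref{rho-a}) and (\ref{rho-b})), of the sparse, $P_{00}$-plus-diagonal type you describe, and computing $\tr[\mathbf{W}_{\mathrm{ext}}\rho_a]=\tfrac{4}{\mathcal{N}}(a-1)<0$ and $\tr[\mathbf{W}_{\mathrm{unext}}\rho_b]=\tfrac{4}{\mathcal{N}}(b-1)<0$ for parameter values below $1$. Your observation that the two witnesses need genuinely different detecting states, and that these states double as new families of bound entangled states, is exactly what the paper records.
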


\begin{proof}
In the Appendix (Eqs. (\ref{best=1}) and (\ref{best=2})) we present the matrix form of these witnesses. Non-decomposability follows by constructing families of PPT states $\rho_{a}$ and $\rho_b$, defined in (\ref{rho-a}) and (\ref{rho-b}), which violate the separability criterion. Firstly, we see that
\begin{equation}\nonumber
  \tr[ \mathbf{W}_{\text{ext}}\rho_{a}] = \frac{4}{\mathcal{N}}(a-1) < 0 ,
\end{equation}
for $a<1$. Since $\rho_{a}$ is PPT in this region and the separability condition is clearly violated when $a<1$, this proves that $\mathbf{W}_{\text{ext}}$ is non-decomposable.

Similarly, after a simple calculation, we have
\begin{equation}\label{}\nonumber
  \tr[\mathbf{W}_{\text{unext}}\rho_b] =\frac{4}{\mathcal{N}} (b -1) < 0 ,
\end{equation}
for $b <1$. This proves that $\mathbf{W}_{\text{unext}}$ is indeed non-decomposable (and $\rho_b$ is PPT entangled for $b<1$).

Hence, for $d=4$ we have shown that both extendible and unextendible MUBs give rise to a non-decomposable EWs. As a by-product we present two new families (\ref{rho-a}) and (\ref{rho-b}) of PPT entangled states in a $4\times 4$ dimensional system that are not locally unitarily equivalent.
\end{proof}

Furthermore, from Sec. \ref{sec:2mubs}, the entanglement witness $\mathbf{W}^{\Gamma}(\mathcal{M}_2,1)$, with a Fourier connected pair of measurements, is decomposable. Hence, while we have not ruled out all possible pairs from detecting bound entanglement, the extension to a third basis appears necessary. When $d=2^r$, we predict that $\frac d 2+1$ MUBs are both necessary and sufficient for the witness $\mathbf{W}^{\Gamma}(\mathcal{M}_m,s)$ to be non-decomposable.

\begin{conjecture}\label{con} If $d=2^r$ and $r>1$, the minimal number of MUBs required for bound entanglement detection equals $m=\frac d 2+1=2^{r-1}+1$.
\end{conjecture}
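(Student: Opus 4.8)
The plan is to prove the two inequalities separately: a \emph{sufficiency} statement, that for $d=2^r$ there exists a set of $m=2^{r-1}+1=\tfrac d2+1$ MUBs for which $\mathbf{W}^{\Gamma}(\mathcal{M}_m,s)$ is non-decomposable, and a \emph{necessity} statement, that \emph{every} set of $m\leq 2^{r-1}=\tfrac d2$ MUBs gives a decomposable $\mathbf{W}^{\Gamma}(\mathcal{M}_m,s)$, so that no witness of this family built from fewer measurements can certify bound entanglement. Proposition \ref{prop:d=4} already settles sufficiency for $r=2$, so the first half amounts to extending that explicit construction to all $r$.

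For sufficiency I would work with the Heisenberg--Weyl MUBs of $\mathbb{C}^{2^r}$, which are indexed by the isotropic lines of the symplectic space over $\mathrm{GF}(2^r)$, and single out a distinguished sub-collection of $2^{r-1}+1$ of them; the fact that both $\mathcal{B}_{\mathrm{ext}}$ and $\mathcal{B}_{\mathrm{unext}}$ work for $d=4$ suggests the property is robust to the choice. For such a set I would write a parametrised family $\rho_x$ assembled from the shift projectors $\Pi_s$ and the Bell projectors $P_{k\ell}$, in the spirit of (\ref{rho-x}), (\ref{rho-a}) and (\ref{rho-b}), tuned so that $\rho_x\geq 0$ and $\rho_x^{\Gamma}\geq 0$ for $x>0$, and then evaluate $\tr[\mathbf{W}^{\Gamma}(\mathcal{M}_m,s)\,\rho_x]$. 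The target is a value of the form $\tfrac{c}{\mathcal{N}}(x-1)$, negative for $x<1$, which simultaneously produces a detected PPT entangled state and certifies non-decomposability.

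For necessity I would try to exhibit, uniformly in the choice of $m\leq \tfrac d2$ MUBs, an explicit splitting $\mathbf{W}^{\Gamma}(\mathcal{M}_m,s)=A+B^{\Gamma}$ with $A,B\geq 0$, following the template of the two-MUB and $d=3$ decompositions in Sections \ref{sec:2mubs} and \ref{sec:d=3}. Here the universal block (the Proposition after Theorem \ref{mainresult}) pins the diagonal entries and the flip-type entries $W_{ij;ij}=a=-(m-1)/d$, and since $m\leq\tfrac d2$ forces $a>-\tfrac12$ the critical submatrix $B_{3\times3}$ is strictly positive; this strict slack is precisely the margin I would use to absorb the basis-dependent off-diagonal phase terms into a positive $A$ while taking $B$ to be a sum of antisymmetric-type rank-one projectors $|i_\alpha j_\alpha\rangle-|j_\alpha i_\alpha\rangle$. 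Exploiting the Weyl covariance of the construction should reduce the positivity of $A$ and $B$ to finitely many $2\times2$ and $3\times3$ block inequalities in the single parameter $a$.

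The necessity direction is where I expect the real difficulty. It must hold for \emph{all} admissible sets of $m\leq\tfrac d2$ MUBs, yet the off-diagonal structure of the witness depends on the specific phases of the chosen bases, and, as observed in Section \ref{sec:2mubs}, even the classification of such sets is open beyond very small dimensions while unitarily related MUBs can yield genuinely inequivalent witnesses. A case-by-case verification is therefore hopeless, and the crux will be to produce a single MUB-independent recipe for $A$ and $B$ that relies only on the universal matrix elements together with the $\mathrm{GF}(2^r)$ group structure; equivalently, one could try to show that the associated Weyl-covariant positive map lies in the cone generated by completely positive and completely copositive maps exactly when $m\leq\tfrac d2$.
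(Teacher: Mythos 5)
The statement you are addressing is Conjecture \ref{con}: the paper does not prove it, and offers only partial evidence --- Proposition \ref{prop:d=4} establishes sufficiency for $d=4$, and Section \ref{sec:2mubs} shows that Fourier-connected pairs (and Heisenberg--Weyl pairs in prime dimensions) yield decomposable witnesses. Your proposal is an honest research plan, but it does not close either direction, so as a proof it has a genuine gap at every step that matters. For sufficiency at general $r$ you never specify which $2^{r-1}+1$ bases to take, what the state $\rho_x$ is, or why $\rho_x\geq 0$ and $\rho_x^{\Gamma}\geq 0$; the entire content of extending Proposition \ref{prop:d=4} from $r=2$ to arbitrary $r$ lies precisely in those verifications, and the analogy with Eqs.~(\ref{rho-a}) and (\ref{rho-b}) is a heuristic, not an argument. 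Note also that for $m=\tfrac d2+1$ the universal-submatrix obstruction from the proof of Theorem \ref{mainresult} gives $a=-\tfrac12$, exactly the boundary of positivity of $B_{3\times 3}$, so non-decomposability cannot be certified from the universal matrix elements alone and must exploit the basis-dependent off-diagonal phases --- which is why an explicit, checked construction is unavoidable rather than optional.

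For necessity the gap is deeper. Positivity of the $3\times3$ block for $m\leq\tfrac d2$ is merely the failure of one particular obstruction; it does not imply the existence of a global splitting $\mathbf{W}^{\Gamma}(\mathcal{M}_m,s)=A+B^{\Gamma}$ with $A,B\geq 0$, and ``strict slack'' in a $3\times3$ principal submatrix does not propagate to positivity of a $d^2\times d^2$ operator whose remaining entries depend on the unclassified phases of the chosen bases. Your proposed reliance on Weyl covariance already fails at $d=4$: pairs of MUBs in $\mathbb{C}^4$ sweep out a continuous family (via the one-parameter family of $4\times4$ complex Hadamard matrices), most members of which are not Heisenberg--Weyl covariant, and the paper itself flags that even the two-MUB case is open beyond Fourier and Heisenberg--Weyl pairs. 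So necessity is unproven even in the smallest instance $r=2$. In short, your plan correctly identifies the two halves of the conjecture and the obstacles to each, but neither half is actually established, and there is no proof in the paper for your argument to be matched against.
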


We point out that the three MUBs used to construct $\mathbf{W}_{\text{unext}}$ have the unusual feature of being bases in $\mathbb{R}^4$. It is known that the number of \emph{real} MUBs in $\mathbb{R}^d$ is at most $\frac{d}{2}+1$, and the bound is tight when $d=4^r$ \cite{boykin05}. This suggests the possibility of finding non-decomposable witnesses from real MUBs in these special dimensions. It would be interesting to determine the usefulness of real MUBs for entanglement detection \cite{add1}, especially in light of their various constructions, e.g. \cite{boykin05}.

The above conjecture, together with the example of the witness $\mathbf{W}_{\text{unext}}$, suggests that \emph{large} sets of unextendible MUBs in higher dimensions can be used for bound entanglement detection. For $d=4,8$ it is known that $m=d/2+1$ unextendible MUBs exist, and it is conjectured that such bases also occur when $d=2^r$ \cite{mandayam14}. In contrast, smaller sets of unextendible MUBs are not expected to yield examples of non-decomposable witnesses, including those constructed up to $d=16$ in \cite{grassl17}, and the $p+1$ unextendible MUBs found in \cite{hedge15} when $d=p^2$.


\section{Conclusions}

In this article we investigated the minimal number of mutually unbiased bases needed to detect bound entangled states. We have shown that sets of $m$ MUBs in $\mathbb{C}^d$, when $m>d/2+1$, give rise to non-decomposable witnesses. Thus, bound entanglement verification can be achieved with significantly fewer measurements than a complete set of $d+1$ MUBs. This presents a more favourable, experimentally friendly, detection approach compared with typical quantum state tomography methods, or witnesses composed of complete sets. It also opens up the possibility of applying these witnesses to detect PPT entangled states in dimensions \emph{without} the existence of a complete set, i.e. in non-prime-power dimensions.

We have shown that the original EW composed of $d+1$ MUBs in \cite{PhysRevA.86.022311} is decomposable, and therefore cannot detect bound entangled states. We also found that pairs of Fourier connected MUBs (as well as Heisenberg-Weyl pairs in prime dimensions) are unable to certify bound entanglement. When $d=3$, we proved that three MUBs are necessary and sufficient to detect bound entangled states. In $d=4$, we strengthened Theorem \ref{mainresult} to show that PPT entangled states can be detected with $m=3=\frac d2+1$ MUBs. As a consequence, we found new families of bipartite bound entangled states in a ($4\times4$) system, and conjectured $m=\frac d2+1$ MUBs are required to detect bound entanglement if $d=2^r$ with $r>1$.

In recent work it has been shown that certain sets of MUBs are more useful than others---a consequence of the unitary inequivalence of different classes of MUBs---in information processing tasks \cite{designolle19,aguilar,hiesmayr20}. In contrast, Theorem \ref{mainresult} shows that since \emph{any} set of $m$ MUBs ($m>d/2+1$) detects bound entangled states, the non-decomposability of a witness is a universal property of mutually unbiased bases. However, since different equivalence classes give rise to different EWs, an interesting interplay is revealed between the universality of MUBs for non-decomposable witnesses and the more local dependence on the choice of MUBs in detecting a given bound entangled state. The nature of such connections would be worth investigating further since they may shed light on the structure of the set of bound entangled states.

Other interesting directions include studying the optimality of these witnesses, and determining whether a reduction in the number of measurements can result in a decomposable witness transitioning to a non-decomposable one. Further open questions include whether \emph{every} pair of MUBs is unable to detect bound entanglement, and, more generally, the minimal number of MUBs needed to detect bound entangled states. Another avenue to explore is the construction of new families of bipartite PPT entangles states, which are detected by the large class of non-decomposable EWs we have studied.



\vspace{1cm}
\section*{Acknowledgements}
JB is supported by the National Research Foundation of Korea (Grant No. NRF-2021R1A2C200 6309, NRF-2022M1A3C2069728) and the Institute for Information \& Communication Technology Promotion (IITP) (the ITRC Program/IITP-2022-2018-0-01402). AB and DC were supported by the Polish National Science Centre project 2018/30/A/ST2/00837. BCH acknowledges gratefully that this research was funded in whole, or in part, by the  Austrian Science Fund (FWF) project P36102-N. DM has received funding from the European Union's Horizon 2020 research and innovation programme under the Marie Sk\l{}odowska-Curie grant agreement No. 663830. DM also acknowledges financial support by the TEAM-NET project co-financed by the EU within the Smart Growth Operational Programme (Contract No. POIR.04.04.00-00-17C1/18-00).

\onecolumngrid

\section*{Appendix: Non-decomposable witness when $d=4$}

As shown in Prop. \ref{prop:d=4}, we find two non-decomposable EWs using sets of three MUBs. The witness corresponding to the three Heisenberg-Weyl MUBs is given by
\begin{equation}\label{best=1}
\mathbf{W}_{\text{ext}} = \left(
\begin{array}{cccc|cccc|cccc|cccc}
 1 & . & . & . & . & -\frac{1}{2} & . & . & . & . & -\frac{1}{2} & . & . & . & . & -\frac{1}{2} \\
 . & . & . & . & . & . & . & . & . & . & . & . & . & . & -\frac{1}{2} & . \\
 . & . & 1 & . & . & . & . & . & -\frac{1}{2} & . & . & . & . & . & . & . \\
 . & . & . & 1 & . & . & -\frac{1}{2} & . & . & . & . & . & . & . & . & . \\  \hline
 . & . & . & . & 1 & . & . & . & . & . & . & -\frac{1}{2} & . & . & . & . \\
 -\frac{1}{2} & . & . & . & . & 1 & . & . & . & . & -\frac{1}{2} & . & . & . & . & -\frac{1}{2} \\
 . & . & . & -\frac{1}{2} & . & . & . & . & . & . & . & . & . & . & . & . \\
 . & . & . & . & . & . & . & 1 & . & . & . & . & . & -\frac{1}{2} & . & . \\ \hline
 . & . & -\frac{1}{2} & . & . & . & . & . & 1 & . & . & . & . & . & . & . \\
 . & . & . & . & . & . & . & . & . & 1 & . & . & -\frac{1}{2} & . & . & . \\
 -\frac{1}{2} & . & . & . & . & -\frac{1}{2} & . & . & . & . & 1 & . & . & . & . & -\frac{1}{2} \\
 . & . & . & . & -\frac{1}{2} & . & . & . & . & . & . & . & . & . & . & . \\  \hline
 . & . & . & . & . & . & . & . & . & -\frac{1}{2} & . & . & . & . & . & . \\
 . & . & . & . & . & . & . & -\frac{1}{2} & . & . & . & . & . & 1 & . & . \\
 . & -\frac{1}{2} & . & . & . & . & . & . & . & . & . & . & . & . & 1 & . \\
 -\frac{1}{2} & . & . & . & . & -\frac{1}{2} & . & . & . & . & -\frac{1}{2} & . & . & . & . & 1 \\
\end{array}
\right)\,.
\end{equation}
We consider the family of PPT states
\begin{equation}\label{rho-a}
\rho_{a}=\frac{1}{\mathcal{N}}\left(
\begin{array}{cccc|cccc|cccc|cccc}
 1 & . & . & . & . & 1 & . & . & . & . & 1 & . & . & . & . & 1 \\
 . & 1/a & . & . & . & . & . & . & . & . & . & . & . & . & 1 & . \\
 . & . & 1 & . & . & . & . & . & 1 & . & . & . & . & . & . & . \\
 . & . & . & a & . & . & 1 & . & . & . & . & . & . & . & . & . \\ \hline
 . & . & . & . & a & . & . & . & . & . & . & 1 & . & . & . & . \\
 1 & . & . & . & . & 1 & . & . & . & . & 1 & . & . & . & . & 1 \\
 . & . & . & 1 & . & . & 1/a & . & . & . & . & . & . & . & . & . \\
 . & . & . & . & . & . & . & 1 & . & . & . & . & . & 1 & . & . \\ \hline
 . & . & 1 & . & . & . & . & . & 1 & . & . & . & . & . & . & . \\
 . & . & . & . & . & . & . & . & . & a & . & . & 1 & . & . & . \\
 1 & . & . & . & . & 1 & . & . & . & . & 1 & . & . & . & . & 1 \\
 . & . & . & . & 1 & . & . & . & . & . & . & 1/a & . & . & . & . \\ \hline
 . & . & . & . & . & . & . & . & . & 1 & . & . & 1/a & . & . & . \\
 . & . & . & . & . & . & . & 1 & . & . & . & . & . & 1 & . & . \\
 . & 1 & . & . & . & . & . & . & . & . & . & . & . & . & a & . \\
 1 & . & . & . & . & 1 & . & . & . & . & 1 & . & . & . & . & 1 \\
\end{array}
\right)\,,
\end{equation}
with $a>0$, which are verified as entangled by $\mathbf{W}_{\text{ext}}$ when $a<1$.

The second witness, constructed from a set of three unextendible MUBs, takes the form
\begin{equation}\label{best=2}
\mathbf{W}_{\text{unext}} = \left(
\begin{array}{cccc|cccc|cccc|cccc}
 1 & . & . & . & . & -\frac{1}{2} & . & . & . & . & -\frac{1}{2} & . & . & . & . & -\frac{1}{2} \\
 . & . & . & . & -\frac{1}{2} & . & . & . & . & . & . & . & . & . & . & . \\
 . & . & 1 & . & . & . & . & . & -\frac{1}{2} & . & . & . & . & . & . & . \\
 . & . & . & 1 & . & . & . & . & . & . & . & . & -\frac{1}{2} & . & . & . \\ \hline
 . & -\frac{1}{2} & . & . & 1 & . & . & . & . & . & . & . & . & . & . & . \\
 -\frac{1}{2} & . & . & . & . & 1 & . & . & . & . & -\frac{1}{2} & . & . & . & . & -\frac{1}{2} \\
 . & . & . & . & . & . & . & . & . & -\frac{1}{2} & . & . & . & . & . & . \\
 . & . & . & . & . & . & . & 1 & . & . & . & . & . & -\frac{1}{2} & . & . \\ \hline
 . & . & -\frac{1}{2} & . & . & . & . & . & 1 & . & . & . & . & . & . & . \\
 . & . & . & . & . & . & -\frac{1}{2} & . & . & 1 & . & . & . & . & . & . \\
 -\frac{1}{2} & . & . & . & . & -\frac{1}{2} & . & . & . & . & 1 & . & . & . & . & -\frac{1}{2} \\
 . & . & . & . & . & . & . & . & . & . & . & . & . & . & -\frac{1}{2} & . \\ \hline
 . & . & . & -\frac{1}{2} & . & . & . & . & . & . & . & . & . & . & . & . \\
 . & . & . & . & . & . & . & -\frac{1}{2} & . & . & . & . & . & 1 & . & . \\
 . & . & . & . & . & . & . & . & . & . & . & -\frac{1}{2} & . & . & 1 & . \\
 -\frac{1}{2} & . & . & . & . & -\frac{1}{2} & . & . & . & . & -\frac{1}{2} & . & . & . & . & 1 \\
\end{array}
\right)\,.
\end{equation}
Note, that $\mathbf{W}_{\text{unext}}^\Gamma=\mathbf{W}_{\text{unext}} $. This is shown to be non-decomposable by considering the family of PPT states,
\begin{equation}\label{rho-b}
\rho_b = \frac{1}{\mathcal{N}}\left(
\begin{array}{cccc|cccc|cccc|cccc}
 1 & . & . & . & . & 1 & . & . & . & . & 1 & . & . & . & . & 1 \\
 . & 1/b & . & . & 1 & . & . & . & . & . & . & . & . & . & . & . \\
 . & . & 1 & . & . & . & . & . & 1 & . & . & . & . & . & . & . \\
 . & . & . & b & . & . & . & . & . & . & . & . & 1 & . & . & . \\ \hline
 . & 1 & . & . & b & . & . & . & . & . & . & . & . & . & . & . \\
 1 & . & . & . & . & 1 & . & . & . & . & 1 & . & . & . & . & 1 \\
 . & . & . & . & . & . & 1/b & . & . & 1 & . & . & . & . & . & . \\
 . & . & . & . & . & . & . & 1 & . & . & . & . & . & 1 & . & . \\ \hline
 . & . & 1 & . & . & . & . & . & 1 & . & . & . & . & . & . & . \\
 . & . & . & . & . & . & 1 & . & . & b & . & . & . & . & . & . \\
 1 & . & . & . & . & 1 & . & . & . & . & 1 & . & . & . & . & 1 \\
 . & . & . & . & . & . & . & . & . & . & . & 1/b & . & . & 1 & . \\ \hline
 . & . & . & 1 & . & . & . & . & . & . & . & . & 1/b & . & . & . \\
 . & . & . & . & . & . & . & 1 & . & . & . & . & . & 1 & . & . \\
 . & . & . & . & . & . & . & . & . & . & . & 1 & . & . & b & . \\
 1 & . & . & . & . & 1 & . & . & . & . & 1 & . & . & . & . & 1 \\
\end{array}
\right)\,,
\end{equation}
with $b>0$, which are verified as entangled when $b<1$.

\end{document}